\newtheorem{theorem}{Theorem}            
\numberwithin{equation}{section}
\newtheorem{lemma}{Lemma}
\newtheorem{proposition}{Proposition}
\newtheorem{definition}{Definition}
\newtheorem{remark}{Remark}
\newenvironment{proof1}{{\noindent\it Proof of Theorem \ref{thm-main}.}\,}{\hfill $\square$ \par}
\newcommand{\f}{\mathbb{F}_{2^n}}
\newcommand{\fm}{\mathbb{F}_{2^m}}
\newcommand{\fs}{\mathbb{F}_{2^n}\backslash\mathbb{F}_{2^m}}
\newcommand{\uc}{\mu_{2^m+1}}
\newcommand{\miu}{\mu_{2^m+1}\backslash\{1\}}
\begin{document}
\title{On differential properties of a class of Niho-type power functions}
\author{Zhexin Wang
\thanks{Z. Wang and N. Li are with the Hubei Key Laboratory of Applied Mathematics, School of Cyber Science and Technology, Hubei University, Wuhan, 430062, China, and also with the State Key Laboratory of Integrated Services Networks (Xidian University), Xi'an, 710071, China. Email: zhexin.wang@aliyun.com, nian.li@hubu.edu.cn}
, Sihem Mesnager
\thanks{S. Mesnager is with the Department of Mathematics, University of Paris VIII, Saint-Denis, Paris, France, and also with the Laboratory of Analysis, Geometry, and Applications (LAGA), University Sorbonne Paris Nord CNRS, UMR 7539, Villetaneuse, France, and also with the Telecom Paris, Palaiseau, France. Email: smesnager@univ-paris8.fr}
, Nian Li and Xiangyong Zeng
\thanks{X. Zeng is with the Hubei Key Laboratory of Applied Mathematics, Faculty of Mathematics and Statistics, Hubei University, Wuhan, 430062, China. Email: xiangyongzeng@aliyun.com}
}
\date{\today}
\maketitle
\begin{quote}
  {\small {\bf Abstract:}

    This paper deals with Niho functions which are one of the most important classes of functions thanks to their close connections with a wide variety of objects from mathematics, such as spreads and oval polynomials or from applied areas, such as symmetric cryptography, coding theory and sequences.
    In this paper, we investigate specifically the $c$-differential uniformity of the power function $F(x)=x^{s(2^m-1)+1}$ over the finite field $\f$, where $n=2m$, $m$ is odd and $s=(2^k+1)^{-1}$ is the multiplicative inverse of $2^k+1$ modulo $2^m+1$, and show that the $c$-differential uniformity of $F(x)$ is $2^{\gcd(k,m)}+1$ by carrying out some subtle manipulation of certain equations over $\f$. Notably, $F(x)$ has a very low $c$-differential uniformity equals $3$ when $k$ and $m$ are coprime.
  }

 {\small {\bf Keywords:} Power function, $c$-differential uniformity, Niho exponent.
  \\ MSC:  12E20, 11T06, 94A60}

\end{quote}

\section{Introduction}\label{intro}

A {\em Dillon exponent} with respect to the finite field $\f$ is $s(2^m-1)$, where $n=2m$ and $s$ is coprime with $2^m+1$. Moreover, a positive integer $d$ (always understood modulo $2^n-1$) is said to be a {\em Niho exponent}, and $x^d$ is a {\em Niho power function}, if the restriction of $x^d$ to $\fm$ is linear or in other words $d\equiv 2^j\pmod {2^m-1}$ for some $j<n$.
Niho functions and, more significantly, those which are power functions, for their low implementation cost in hardware, have attracted much attention explored from their (algebraic) property initially inherited from applied domains, such as their nonlinearity or their differential property when the properties are optimal giving rise, respectively to maximally nonlinear functions (or bent functions) and low-differential uniformity or their generalization, specifically low $c$-differential uniform functions, that have direct impacts on combinatorial designs. A recent relationship between linear cyclic codes and low differentially uniform functions has been pointed out in \cite{SM.MS2022}.
In this article, we are interested in the differential properties of a class of Niho-type power functions. Initially, such a differential uniformity property \cite{KN1994} was introduced for a cryptographic purpose directly related to differential cryptanalysis, one of the most fundamental cryptanalytic approaches targeting symmetric key primitives and the first statistical attack for breaking iterated block ciphers \cite{EB.AS1991}. Next, it has been extended in recent years to $c$-differential uniformity \cite{PE.PF2020} for a specific non-zero element $c$ living the finite field. More importantly, an interesting connection between $c$-differential uniformity and combinatorial designs has been highlighted in \cite{NA.TK2022} by showing that the graph of a perfect $c$-nonlinear function (an optimal function with respect to the $c$-differential uniformity) is a set of differences in a quasigroup. Difference sets give rise to symmetric designs, which are known to build optimal self-complementary codes. Some types of designs also have application implications, such as secret sharing and visual cryptography.

Below we recall the basic definitions of the framework of the $c$-differential uniformity for functions defined over any finite field.
\begin{definition}{\rm(\cite{PE.PF2020})}\label{def-c-diff}	
    Let $\mathbb{F}_{p^n}$ denote the finite field with $p^n$ elements, where $p$ is a prime number, and $n$ is a positive integer. For a function $F$ over $\mathbb{F}_{p^n}$, and $c\in\mathbb{F}_{p^n}$, define the multiplicative $c$-derivative of $F$ with respect to $a\in\mathbb{F}_{p^n}$ as
    $$_cD_aF(x) = F(x+a)-cF(x)$$
    for each $x\in\mathbb{F}_{p^n}$. Denote $_c\Delta_F(a,b)=\#\{ x\in\mathbb{F}_{p^n}: {_c}D_aF(x) = b \}$ for $b\in\mathbb{F}_{p^n}$, and call $_c\Delta_F = \max\{ _c\Delta_F(a,b): a,b\in \mathbb{F}_{p^n}\ and\ a\ne 0\ if\ c=1 \}$ the $c$-differential uniformity of $F$.
\end{definition}

Finding functions, particularly permutations, with good differential properties has recently received much attention due to their possible applications in cryptography and combinatorial design. In this paper, we investigate the $c$-differential uniformity of a class of power functions over finite fields of even characteristics with low $c$-differential uniformity. The latter has been widely studied in recent years due to their simple algebraic form and lower implementation costs in hardware. The reader is referred to \cite{SM.CR2021, PE.PF2020, ZT.NL2023, XW.DZ2022, HY.KZ2022} and the references therein. It is typically challenging to obtain power functions with low $c$-differential uniformity.
To the best of our knowledge, we summarize the known power functions over finite fields of even characteristics with $_c\Delta_F \leq 3$ in Table \ref{tab}.

\begin{table}\label{tab}
\caption{Power functions $F(x) = x^d$ over $\f$ with $_c\Delta_F \leq 3$}
\begin{center}
\begin{tabular}{|l|p{5.6cm}|c|c|}
\hline
$d$  &  Conditions on $n$ and $c$  &  $_c\Delta_F$  &  Reference \\
\hline\hline
$2^k+1$  &  $n\geq3$, $c\in\mathbb{F}_{2^{\gcd(k,n)}}\backslash\{1\}$  &  $1$  &  \cite{SM.CR2021} \\ \hline
$2^n-2$  &  $c\ne 0$, ${\rm Tr}(c) = {\rm Tr}(1/c)=1$  &  $2$  &  \cite{PE.PF2020} \\ \hline
$2^{3m}+2^{2m}+2^m-1$ & $n=4m$, $c\in\mu_{2^{2m}+1}\backslash\{1\}$  &  $2$  &  \cite{ZT.NL2023} \\ \hline
$2^n-2$  &  $c\ne 0$, ${\rm Tr}(c)=0$ or ${\rm Tr}(1/c) = 0$  &  $3$  &  \cite{PE.PF2020} \\ \hline
$2^k+1$  &  $\gcd(k,n)=1$, $c\in\mathbb{F}_{2^n}\backslash\mathbb{F}_2$  &  $3$  &  \cite{SM.CR2021} \\ \hline
$(2^{n+1}-1)/3$  &  $2^n\equiv 2\pmod 3$, $c\ne 1$  &  $\leq 3$  &  \cite{SM.CR2021} \\ \hline
$(2^m-1)/(2^k+1)+1$ & $n=2m$, $m$ odd, $c^{2^m+1}=1$, $c\ne 1$  &  $3$  &  Theorem \ref{thm-main} \\ \hline
\end{tabular}
\end{center}
\end{table}

Very recently, Xie, Mesnager, Li et al. studied the differential property of the power function $F(x)=x^{s(2^m-1)+1}$ over the finite field $\f$ and determined its differential spectrum \cite{XX.SM2022}, where $n=2m$, $s$ is the multiplicative inverse of $2^k+1$ modulo $2^m+1$ and $\gcd(k,m)=1$. This inspired us to investigate the $c$-differential uniformity of this power function $F(x)$. Consequently, by making the most use of certain techniques in solving equations over finite fields, different from the ones used in \cite{XX.SM2022}, we show that for $c\in\f\backslash\{1\}$ satisfying $c^{2^m+1}=1$ the $c$-differential uniformity of $F(x)$ is equal to $2^{\gcd(k,m)}+1$ when $m$ is odd with $ \gcd(2^k+1,2^m+1)=1$. In particular, we can obtain an infinite class of power functions over the finite field $\f$ with $c$-differential uniformity $3$ if $\gcd(k,m)=1$. By comparing the algebraic degrees and the values of $c$, it can be readily verified that the functions with $c$-differential uniformity $3$ in this paper are not equivalent to the known ones in Table \ref{tab} up to the equivalence relation introduced in \cite{SH.MP2021}.

The remainder of this paper is organized as follows. Section \ref{prel} introduces some notions and auxiliary results. Section \ref{diff} presents the main results and their proofs. Section \ref{conc} concludes this study.

\section{Preliminaries}\label{prel}

Let $n=2m$ and $m$ is odd. Let $\f$ be the finite field with $2^n$ elements and $\f^*=\f\backslash\{0\}$. Denote the unit circle of $\f$ by
$$\mu_{2^m+1}=\{ x\in\f: x^{2^m+1}=1 \}.$$

A helpful characterization for the $c$-differential uniformity of power functions is given below.

\begin{lemma}{\rm(\cite[Lemma 1]{SM.CR2021})}\label{lem-c-power}
    Let $F(x)=x^d$ be a power function over $\f$. Then
    $$_c\Delta_F = \max\{ \{_c\Delta_F(1,b): b \in \f\} \cup \{\gcd(d,2^n-1)\} \}.$$
    More precisely,
    $$_c\Delta_F(0,b) = \begin{cases}
		1, &  {\rm if}\ b = 0, \\
		\gcd(d,2^n-1), &  {\rm if}\, \frac{b}{1-c}\in\f^*\; {\rm is}\;  {\rm a}\;  \mbox{d-th}\,{\rm power}, \\
		0, &  {\rm otherwise}. \end{cases}$$
\end{lemma}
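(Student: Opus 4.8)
The plan is to collapse the two-parameter family $\{{}_c\Delta_F(a,b)\}$ down to the single slice $a=1$ by exploiting the homogeneity of the power map, and to treat the degenerate direction $a=0$ separately, since that is precisely where the factor $\gcd(d,2^n-1)$ is produced.

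First I would establish the ``More precisely'' formula by a direct computation, which incidentally supplies the $\gcd$ term. Since the pair $(a,b)=(0,b)$ is admissible only when $c\neq1$, assume $c\neq1$, so that $1-c\in\f^*$. Then
$$ {}_cD_0F(x)=F(x)-cF(x)=(1-c)x^d, $$
and ${}_cD_0F(x)=b$ is the equation $x^d=b/(1-c)$. If $b=0$ this forces $x=0$, so ${}_c\Delta_F(0,0)=1$. If $b\neq0$, put $u=b/(1-c)\in\f^*$; recalling that $x\mapsto x^d$ is exactly $\gcd(d,2^n-1)$-to-one from $\f^*$ onto the subgroup of $d$-th powers (because $\f^*$ is cyclic of order $2^n-1$), the equation $x^d=u$ has $\gcd(d,2^n-1)$ solutions when $u$ is a $d$-th power and none otherwise. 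This is exactly the claimed case distinction.

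Next I would prove the homogeneity reduction. For $a\neq0$, substitute $x=ay$ and use $(ay+a)^d=a^d(y+1)^d$ together with $(ay)^d=a^dy^d$ to obtain
$$ {}_cD_aF(ay)=(ay+a)^d-c(ay)^d=a^d\bigl((y+1)^d-cy^d\bigr)=a^d\,{}_cD_1F(y). $$
Because $y\mapsto ay$ is a bijection of $\f$, the equation ${}_cD_aF(x)=b$ has the same number of solutions as ${}_cD_1F(y)=b\,a^{-d}$, so that ${}_c\Delta_F(a,b)={}_c\Delta_F(1,b\,a^{-d})$ for all $a\in\f^*$ and all $b$. Letting $b$ range over $\f$ shows that the value set $\{{}_c\Delta_F(a,b):b\in\f\}$ does not depend on the choice of $a\in\f^*$, whence
$$ \max\{{}_c\Delta_F(a,b):a\in\f^*,\,b\in\f\}=\max\{{}_c\Delta_F(1,b):b\in\f\}. $$

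Finally I would assemble the pieces. For $c\neq1$ the admissible pairs split into those with $a\neq0$, contributing $\max_b{}_c\Delta_F(1,b)$ by the reduction, and those with $a=0$, contributing $\gcd(d,2^n-1)$ by the first step; the latter value is genuinely attained, for instance at $b=1-c$, since then $b/(1-c)=1=1^d$ is a $d$-th power. Taking the overall maximum yields the stated identity, the $a=0$ direction being exactly what injects the $\gcd(d,2^n-1)$ term. When $c=1$ the pairs with $a=0$ are excluded by definition, so the reduction alone gives ${}_c\Delta_F=\max_b{}_c\Delta_F(1,b)$; in the regime $c\neq1$ used throughout this paper the ``More precisely'' formula and the $\gcd$ term apply verbatim. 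I expect the only delicate point to be the clean bookkeeping that separates the direction $a=0$ from the directions $a\neq0$, and in particular the verification that the bound $\gcd(d,2^n-1)$ coming from $a=0$ is realized rather than merely an upper bound; the homogeneity substitution itself is routine, and one must only remember that the ``More precisely'' formula presupposes $c\neq1$, since for $c=1$ the derivative ${}_cD_0F$ vanishes identically.
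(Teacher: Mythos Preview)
The paper does not prove this lemma; it is quoted verbatim as \cite[Lemma 1]{SM.CR2021} and used as a black box. Your argument is correct and is the standard one: the homogeneity substitution $x=ay$ collapses all nonzero directions to $a=1$, and the degenerate direction $a=0$ (admissible only when $c\neq1$) yields $x^d=b/(1-c)$, which contributes the $\gcd(d,2^n-1)$ term via the cyclic structure of $\f^*$. Your care in checking that this $\gcd$ value is actually attained (e.g.\ at $b=1-c$) and your remark that the $a=0$ clause presupposes $c\neq1$ are both warranted and make the write-up complete.
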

The following lemmas will be used to prove our main results in the sequel.

\begin{lemma}{\rm(\cite{HD.PF2006})}\label{lem-quad}
    Let $n,m,r$ be positive integers, and $n=2m$. The quadratic polynomial
    $$Q(x) = x^{2^r+1}+ax^{2^r}+bx+c, \quad a,b,c\in\mathbb{F}_{2^n}$$
    has either $0,1,2$, or $2^{r_0}+1$ roots in $\mathbb{F}_{2^n}$, where $r_0=\gcd(n,r)$. Moreover, if $Q$ has three distinct roots $x_0,x_1,x_2\in\mu_{2^m+1}$, then $Q$ has $2^{r_1}+1$ distinct roots in $\mu_{2^m+1}$ where $r_1=\gcd(m,r_0)$.
\end{lemma}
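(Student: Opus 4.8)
The plan is to treat the two assertions separately, reducing each to the count of solutions of an affine (linearized-plus-constant) equation. For the first assertion I would first dispose of the case that $Q$ has no root in $\f$. Otherwise, fix a root $x_0$ and substitute $x=y+x_0$. Using $Q(x_0)=0$ together with the characteristic-$2$ identity $(y+x_0)^{2^r+1}=y^{2^r+1}+x_0y^{2^r}+x_0^{2^r}y+x_0^{2^r+1}$, the constant term and the mixed terms cancel, leaving $Q(y+x_0)=y^{2^r+1}+Ay^{2^r}+By$ with $A=a+x_0$ and $B=b+x_0^{2^r}$. Thus $y=0$ recovers $x_0$, while every other root of $Q$ corresponds to a nonzero root of $y^{2^r}+Ay^{2^r-1}+B=0$; setting $t=1/y$ turns this into the affine equation $Bt^{2^r}+At+1=0$. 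Writing $L(t)=Bt^{2^r}+At$, which is $\ftwo$-linear, its solutions (if any) form a coset of $\ker L=\{t:t^{2^r-1}=A/B\}\cup\{0\}$ when $B\neq0$; since $\gcd(2^r-1,2^n-1)=2^{r_0}-1$, we get $|\ker L|\in\{1,2^{r_0}\}$, so $L(t)=1$ has $0$, $1$, or $2^{r_0}$ solutions. Restoring the root $x_0$ yields the stated total $0,1,2$, or $2^{r_0}+1$, the degenerate cases ($B=0$ or a vanishing leading coefficient) only lowering the count.

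For the second assertion, the key observation is $\gcd(m,r_0)=\gcd(m,\gcd(2m,r))=\gcd(m,r)=r_1$, so the target count $2^{r_1}+1$ is exactly what the first-assertion argument produces when it is carried out over $\fm$ in place of $\f$. To transport $Q$ to $\fm$ I would parametrise the unit circle: fixing $g\in\f\setminus\fm$, the map $w\mapsto(w+g)/(w+g^{2^m})$ is a bijection from $\fm\cup\{\infty\}$ onto $\uc$, since the norm-one condition follows at once from $w^{2^m}=w$. Substituting $x=(w+g)/(w+g^{2^m})$ into $Q(x)=0$ and clearing denominators returns a polynomial $\widehat{Q}(w)$ of the same quadratic shape $(1+a+b+c)w^{2^r+1}+(\cdots)w^{2^r}+(\cdots)w+(\cdots)$, whose roots in $\fm$ correspond bijectively to the roots of $Q$ in $\uc\setminus\{1\}$. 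Re-running the first-assertion reduction on $\widehat{Q}$, but now translating by a root $w_0\in\fm$ and inverting inside $\fm$ (so that $t=1/(w+w_0)\in\fm$), reduces to counting $t\in\fm$ with $B't^{2^r}+A't+1=0$; here the cyclic equation $t^{2^r-1}=A'/B'$ is solved in $\fm^\ast$, whose relevant kernel has size $1$ or $\gcd(2^r-1,2^m-1)+1=2^{r_1}$, forcing the number of roots in $\fm$ to be $0,1,2$, or $2^{r_1}+1$. Since the three given roots lie in $\uc$, and we may choose the parametrisation so that its single exceptional point $x=1$ is none of them, they furnish three distinct roots $w\in\fm$, whence the count is exactly $2^{r_1}+1$.

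The main obstacle is the bookkeeping in the second assertion rather than any isolated hard inequality: one must verify that the parametrisation is a genuine bijection onto $\uc$, that the substitution really returns a polynomial of the admissible quadratic shape so that the machinery of the first part applies verbatim over $\fm$, and that the exceptional point $x=1$ (i.e.\ $w=\infty$) together with a possibly vanishing leading coefficient $1+a+b+c$ is handled — conveniently by composing with a norm-one rotation $x\mapsto\zeta x$ (with $\zeta\in\uc$) to push $x=1$ off the root set. The \emph{conceptual} crux, and the reason the exponent drops from $r_0$ to $r_1=\gcd(m,r)$, is precisely that the inversion variable $t$ is now constrained to $\fm$, so the equation $t^{2^r-1}=A'/B'$ sees only $\gcd(2^r-1,2^m-1)$ rather than $\gcd(2^r-1,2^n-1)$ solutions.
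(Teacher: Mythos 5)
The paper gives no proof of this lemma at all --- it is imported wholesale from \cite{HD.PF2006} --- so your proposal has to be judged on its own merits rather than against an internal argument. Most of it holds up. The first assertion is proved correctly: translating by a root $x_0$ leaves $y^{2^r+1}+Ay^{2^r}+By$ with $A=a+x_0$, $B=b+x_0^{2^r}$, the substitution $t=1/y$ gives the affine equation $Bt^{2^r}+At=1$, and the nonempty solution sets are cosets of $\ker\bigl(t\mapsto Bt^{2^r}+At\bigr)$, whose size is $1$ or $1+\gcd(2^r-1,2^n-1)=2^{r_0}$; this yields exactly the advertised set $\{0,1,2,2^{r_0}+1\}$. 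The skeleton of the second assertion is also sound: $w\mapsto (w+g)/(w+g^{2^m})$ is indeed a bijection from $\fm\cup\{\infty\}$ onto $\uc$, the cleared-denominator polynomial $\widehat{Q}$ has the admissible shape with leading coefficient $Q(1)$ (so, as you observe, the vanishing leading coefficient and the exceptional point $x=1$ are one and the same degeneracy), and $r_1=\gcd(m,\gcd(2m,r))=\gcd(m,r)$ together with $\gcd(2^r-1,2^m-1)=2^{r_1}-1$ gives the kernel size $2^{r_1}$. Two points should be made explicit, though they are repairable in a line each: the coefficients of $\widehat{Q}$ lie in $\f$, not in $\fm$, so the first part does not apply ``verbatim over $\fm$''; what you actually need (and what is true) is that the set of $\fm$-solutions of $B't^{2^r}+A't=\alpha$ is, when nonempty, a coset of $\ker L\cap\fm$, since $L$ restricted to $\fm$ is still $\ftwo$-linear with values in $\f$. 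Also your constant term is the leading coefficient $\alpha=Q(1)$ of $\widehat{Q}$, not $1$; normalize by dividing $\widehat{Q}$ through by $\alpha$ first.

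The genuine gap is in the rotation step. To push $x=1$ off the root set you need some $\zeta\in\uc$ with $Q(\zeta)\neq 0$, and no such $\zeta$ exists precisely when $Q$ vanishes identically on $\uc$. That case is not vacuous: $Q(x)=x^{2^m+1}+1$ (i.e.\ $r=m$, $a=b=0$, $c=1$) kills every point of $\uc$, so it certainly has three distinct roots there, and your argument never reaches it; for such $Q$ the lemma asserts $2^m+1=2^{r_1}+1$, i.e.\ $r_1=m$, which must be proved rather than assumed. The patch is short: for $x\in\uc$ one has $x^{2^m}=x^{-1}$, so writing $r=qm+r'$ with $0\le r'<m$, the restriction of $Q$ to $\uc$ equals $x^{\varepsilon 2^{r'}+1}+ax^{\varepsilon 2^{r'}}+bx+c$ with $\varepsilon=(-1)^q$; multiplying by $x^{2^{r'}}$ when $\varepsilon=-1$, one gets a polynomial of degree at most $2^{r'}+1\le 2^{m-1}+1<2^m+1$, which can vanish at all $2^m+1$ points of $\uc$ only if it is the zero polynomial, and comparing coefficients this forces $r'=0$, $q$ odd, $a=b=0$, $c=1$. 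In that situation $r_0=\gcd(2m,qm)=m$ and $r_1=\gcd(m,r_0)=m$, so the conclusion does hold. Once this boundary case is added, your proof is complete and constitutes a legitimate self-contained substitute for the citation.
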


\begin{lemma}\label{lem-map}
    Let $n,m$ be positive integers and $n=2m$. Define the set
    $$\mathcal{D} = \{ (u,v): u,v\in\miu,\ and\ u\ne v \},$$
    and the mapping $\varphi$ from $\mathcal{D}$ to $\fs$ as $\varphi: (u,v)\mapsto\frac{u^2(1+v^2)}{u^2+v^2}$, where $\mu_{2^m+1}$ is the unit circle of $\f$. Then $\varphi$ is a bijection.
\end{lemma}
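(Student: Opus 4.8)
The plan is to exploit the numerical coincidence between the sizes of the domain and codomain and then reduce the problem to injectivity. Since $\miu$ has $2^m$ elements, the set $\mathcal{D}$ of ordered pairs of distinct elements has cardinality $|\mathcal{D}| = 2^m(2^m-1)$, which is exactly $|\fs| = 2^{2m}-2^m$. Hence it suffices to check that $\varphi$ is well-defined (its image lands in $\fs$) and injective; bijectivity then follows immediately from this cardinality match, with no need to describe the inverse image explicitly.

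For well-definedness, write $y = \varphi(u,v) = \frac{u^2(1+v^2)}{u^2+v^2}$ and recall that every $w \in \uc$ satisfies $w^{2^m} = w^{-1}$. Since $u \ne v$, we have $u^2 + v^2 = (u+v)^2 \ne 0$, so $y$ is defined. Applying the Frobenius map $x \mapsto x^{2^m}$ and using $u^{2^m} = u^{-1}$, $v^{2^m} = v^{-1}$, a short computation (clearing denominators by $u^2 v^2$) yields
$$ y^{2^m} = \frac{1+v^2}{u^2+v^2}. $$
Because $v \ne 1$ the numerator $1+v^2 \ne 0$, so $y^{2^m} \ne 0$ and thus $y \ne 0$; comparing the two displayed fractions gives the key identity
$$ y = u^2\, y^{2^m}. $$
As $u \ne 1$ forces $u^2 \ne 1$, this shows $y^{2^m} \ne y$, i.e. $y \notin \fm$, so indeed $\varphi(u,v) \in \fs$.

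Injectivity then follows by inverting these relations. The identity $y = u^2 y^{2^m}$ together with $y^{2^m} \ne 0$ recovers $u^2 = y/y^{2^m}$ from $y$ alone, and since squaring is an automorphism of $\f$ this determines $u$ uniquely. Solving the defining equation $y(u^2+v^2) = u^2(1+v^2)$ for $v^2$ then gives
$$ v^2 = \frac{u^2(1+y)}{u^2+y}, $$
where the denominator $u^2 + y$ is nonzero since $y = u^2$ would force $u = 1$. Thus $v$ is also uniquely determined by $y$, so $\varphi$ is injective, and by the cardinality count it is a bijection.

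The main technical point is the evaluation of $y^{2^m}$ and, in particular, the resulting relation $y = u^2 y^{2^m}$: this is what makes the argument work, since it lets one read off the first coordinate $u$ directly from the image via the Frobenius, after which $v$ is recovered by elementary algebra. The remaining effort is purely the bookkeeping of non-vanishing denominators, controlled by the standing hypotheses $u \ne v$, $u \ne 1$, and $v \ne 1$.
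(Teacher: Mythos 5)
Your proof is correct and takes essentially the same route as the paper: injectivity plus the cardinality match $\#\mathcal{D}=2^n-2^m=\#(\fs)$, where your key identity $y=u^2y^{2^m}$ is exactly the paper's observation that $\alpha=y/u$ lies in $\fm$ (the paper then gets injectivity from $\uc\cap\fm=\{1\}$ applied to the ratio $u_1/u_2=\alpha_2/\alpha_1$, while you equivalently read off $u^2=y/y^{2^m}$ and then $v^2$ explicitly). As a minor bonus you also verify that the image lies in $\fs$, which the paper leaves implicit; just note that your terse claim that $u^2+y\ne 0$ should be justified via $y=u^2y^{2^m}$ (if $y=u^2$ then $y^{2^m}=1$, hence $y=1$ and $u=1$, a contradiction), since $y=u^2\in\uc\setminus\{1\}$ is not by itself incompatible with $y\in\fs$.
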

\begin{proof}
For any $(u_1,v_1)\ne (u_2,v_2)$, define $\alpha_i = \frac{u_i(1+v_i^2)}{u_i^2+v_i^2}$, $i\in\{1,2\}$. Then one gets $\alpha_i \in\fm$ since $u_i, v_i\in\mu_{2^m+1}$. Assume that $\varphi(u_1,v_1) = \varphi(u_2,v_2)$, i.e.,
$$\frac{u_1^2(1+v_1^2)}{u_1^2+v_1^2} = \frac{u_2^2(1+v_2^2)}{u_2^2+v_2^2},$$
one then has
$$\frac{u_1}{u_2} = \frac{u_2(1+v_2^2)}{u_2^2+v_2^2}\cdot\frac{u_1^2+v_1^2}{u_1(1+v_1^2)} = \frac{\alpha_2}{\alpha_1}.$$
Since $u_1/u_2\in \mu_{2^m+1}$, $\alpha_2/\alpha_1\in\fm$ and $\mu_{2^m+1}\cap\fm=\{1\}$, hence we have $u_1=u_2$ and $\alpha_1=\alpha_2$. This leads to
$$\frac{u_1(1+v_1^2)}{u_1^2+v_1^2} = \frac{u_1(1+v_2^2)}{u_1^2+v_2^2},$$
i.e., $(1+u_1)^2(v_1+v_2)^2=0$, which implies $v_1=v_2$ as $u_1\ne 1$. Therefore, $(u_1, v_1)=(u_2, v_2)$, a contradiction with the assumption. That is, $\varphi$ is an injection. Then the desired result follows from the fact $\#\mathcal{D}=2^n-2^m$=$\#(\fs)$. This completes the proof.
\end{proof}

\section{The $c$-differential uniformity of $x^{s(2^m-1)+1}$ for odd $m$}\label{diff}

In this section, we investigate the $c$-differential uniformity of the power function $F(x)=x^{s(2^m-1)+1}$ over $\f$, where $n=2m$, $m$ is odd, $\gcd(2^k+1,2^m+1)=1$ and $s=(2^k+1)^{-1}$ is the multiplicative inverse of $2^k+1$ modulo $2^m+1$.

Our main result is stated below.

\begin{theorem}\label{thm-main}
    Let $F(x)=x^{s(2^m-1)+1}$ be a power function over $\f$, where $n=2m$, $m$ is odd and $s=(2^k+1)^{-1}$ with $\gcd(2^k+1,2^m+1)=1$. Then the $c$-differential uniformity of $F(x)$ is $_c\Delta_F=2^{\gcd(k,m)}+1$ for $c\in\miu$. In particular, $_c\Delta_F=3 $ if $\gcd(k,m)=1$.
\end{theorem}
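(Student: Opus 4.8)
The plan is to reduce the computation of $_c\Delta_F$ to counting solutions of the $c$-differential equation $F(x+1)-cF(x)=b$ over $\f$, since by Lemma~\ref{lem-c-power} we have $_c\Delta_F=\max\{\max_b {_c\Delta_F(1,b)},\ \gcd(d,2^n-1)\}$ where $d=s(2^m-1)+1$. First I would compute $\gcd(d,2^n-1)$ and verify it does not exceed $2^{\gcd(k,m)}+1$, so that the essential content lies in bounding $_c\Delta_F(1,b)$. The key structural observation is that $F$ restricted to the multiplicative setup behaves nicely relative to the unit circle $\uc$: writing $x\in\f^*$ via its decomposition over $\fm$ and $\uc$ (using $n=2m$), the exponent $s(2^m-1)$ is a Dillon-type exponent and $x^{2^m-1}\in\uc$, so $F(x)=x\cdot(x^{2^m-1})^{s}$ maps in a controlled way. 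Because $c\in\miu$, the twist by $c$ interacts with the $\uc$-component, and I expect the equation $F(x+1)=cF(x)+b$ to transform, after raising to appropriate powers and using $s(2^k+1)\equiv 1\pmod{2^m+1}$, into a system involving $(x+1)^{2^m-1}$ and $x^{2^m-1}$ that can be linearized.

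The central step is to convert the $c$-differential equation into an equation on the unit circle. Setting $y=x^{2^m-1}$ and $z=(x+1)^{2^m-1}$, both in $\uc$, and exploiting that $s$ is the inverse of $2^k+1$ so that applying the $(2^k+1)$-power map clears the denominator $s$, I would rewrite the condition as a polynomial relation. The goal is to massage this into the quadratic form $Q(x)=x^{2^r+1}+ax^{2^r}+bx+c$ covered by Lemma~\ref{lem-quad} with $r=k$, whose number of roots in $\uc$ is governed by $2^{\gcd(m,\gcd(n,k))}+1=2^{\gcd(m,k)}+1$ (note $\gcd(n,k)=\gcd(2m,k)$ and, combined with $\gcd(m,\cdot)$, collapses to $\gcd(m,k)$). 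The bijection $\varphi$ of Lemma~\ref{lem-map} is the tool I would use to parametrize the relevant pairs $(u,v)\in\miu$ and thereby translate solution-counting in $\fs$ into solution-counting of this quadratic on $\uc$, so that the trichotomy $0,1,2,2^{r_1}+1$ of Lemma~\ref{lem-quad} directly yields the bound $_c\Delta_F(1,b)\le 2^{\gcd(k,m)}+1$.

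To finish, I would establish the matching lower bound: exhibit a specific $b$ for which the quadratic attains its maximal count $2^{\gcd(k,m)}+1$ of roots in $\uc$, using the second part of Lemma~\ref{lem-quad} (three distinct roots in $\uc$ force $2^{r_1}+1$ roots). This requires producing an instance with at least three distinct solutions, after which the lemma upgrades the count to exactly $2^{\gcd(k,m)}+1$, proving the uniformity is attained. The specialization $\gcd(k,m)=1$ then gives $_c\Delta_F=3$ immediately.

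I expect the main obstacle to be the algebraic manipulation that recasts the $c$-differential equation $F(x+1)=cF(x)+b$ into the clean quadratic form of Lemma~\ref{lem-quad}: the exponent $s$ is only defined implicitly as a modular inverse, so one must clear it by applying the $(2^k+1)$-power operation at exactly the right stage and carefully track how the factor $c\in\uc$ distributes across the $\fm$- and $\uc$-components. Controlling the case $x\in\fm$ versus $x\in\fs$ separately, and ensuring the degenerate solutions (e.g.\ $x=0$, or $x$ with $x^{2^m-1}=1$) are accounted for exactly once, is where the "subtle manipulation of certain equations" advertised in the abstract will be concentrated.
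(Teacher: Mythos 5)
Your upper-bound plan coincides with the paper's own route: decompose $x\in\fs$ in polar form, use the bijection of Lemma~\ref{lem-map} to parametrize solutions by pairs on the unit circle, clear the implicit exponent $s$ with the $(2^k+1)$-power map, treat $x\in\fm$ separately, and bound the number of circle solutions via Lemma~\ref{lem-quad}. One technical detail you got wrong but which does no harm: after eliminating one of the two circle variables, the quadratic that actually appears has exponent $2^{2k}+1$ (so $r=2k$, not $r=k$); since $m$ is odd, $\gcd(m,\gcd(2m,2k))=\gcd(k,m)$, so the count $2^{\gcd(k,m)}+1$ is unchanged.

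The genuine gap is your lower bound. You propose to ``exhibit a specific $b$'' for which the equation $F(x+1)+cF(x)=b$ has at least three solutions and then upgrade the count via the second half of Lemma~\ref{lem-quad}, but you never produce such a $b$, and producing one is not routine: for the natural candidate $b=0$ the system collapses to $y=z^{-1}$, $z^{2^k-1}=c$, which is solvable in $\uc$ only when $c$ lies in the subgroup $\{x\in\f: x^{(2^m+1)/(2^{\gcd(k,m)}+1)}=1\}$; for $c\in\miu$ outside this subgroup one has ${_c}\Delta_F(1,0)=0$, so attaining the bound through ${_c}\Delta_F(1,b)$ would need a different, $c$-dependent choice of $b$ that your plan does not supply (and the paper never establishes that one exists). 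The paper sidesteps this entirely: in Lemma~\ref{lem-c-power} the maximum also runs over the term $\gcd(d,2^n-1)$ with $d=s(2^m-1)+1$, and this term is not merely $\leq 2^{\gcd(k,m)}+1$ but \emph{equal} to it. Indeed, $\gcd(d,2^m-1)=1$ and, modulo $2^m+1$, $d\equiv 1-2s\equiv s(2^k-1)$, so $\gcd(d,2^n-1)=\gcd(2^k-1,2^m+1)$; since $m$ odd and $\gcd(2^k+1,2^m+1)=1$ force $k/\gcd(k,m)$ to be even, this gcd equals $2^{\gcd(k,m)}+1$ exactly. You planned to compute this gcd only to ``verify it does not exceed'' the bound; recognizing the equality is what makes the lower bound immediate and renders your unconstructed third step unnecessary. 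As written, your argument establishes only ${_c}\Delta_F\leq 2^{\gcd(k,m)}+1$.
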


\begin{remark}\label{rem-c}
    Computer experiments indicate that the $c$-differential uniformity of $F(x)$ in Theorem \ref{thm-main} varies with $m$ and seems not to be good when $c\notin\mu_{2^m+1}$.
\end{remark}

\begin{remark}\label{rem-m}
   Computer experiments show that the $c$-differential uniformity of $F(x)$ for even $m$ is either $2$ or $2^{\gcd(2k,m)}+1$ for $c\in\miu$. Our data indicate that for even $m$, the resulting almost perfect $c$-nonlinear functions are equivalent to known ones.
\end{remark}

To prove Theorem \ref{thm-main}, according to Definition \ref{def-c-diff} and Lemma \ref{lem-c-power}, we first discuss the number of solutions of
\begin{equation}\label{eq-c-diff}
	(x+1)^{s(2^m-1)+1}+cx^{s(2^m-1)+1} = b.
\end{equation}

Denote the conjugate of $b$ by $\overline{b}=b^{2^m}$ for any $b\in\f$. First, for the special case $x\in\fm$, we have

\begin{proposition}\label{prop-fm}
    Let $c\in\miu$ and $b\in\mathbb{F}_{2^n}$. Then \eqref{eq-c-diff} has at most one solution in $\fm$. Further, it is solvable in $\fm$ if $c(\overline{b}+1)+b+1=0$ and the solution is $(b+1)/(c+1)$, where $\overline{b} = b^{2^m}$ is the conjugate of $b\in\f$.
\end{proposition}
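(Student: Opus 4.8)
The plan is to analyze what $F(x) = x^{s(2^m-1)+1}$ does when $x \in \fm$, using the structure of the Niho exponent. The crucial observation is that the exponent $s(2^m-1)+1$ restricted to $\fm$ simplifies dramatically: for $x \in \fm^*$ we have $x^{2^m} = x$, so $x^{2^m-1} = 1$, and hence $x^{s(2^m-1)} = 1$. This means $F(x) = x^{s(2^m-1)+1} = x$ for all $x \in \fm^*$, and trivially $F(0)=0$, so $F$ acts as the identity on $\fm$. Therefore I would first reduce \eqref{eq-c-diff} on the domain $x \in \fm$ to the affine equation
\begin{equation*}
(x+1) + cx = b,
\end{equation*}
which I would rewrite as $(c+1)x = b+1$. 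This is the heart of the simplification and immediately suggests the answer.

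Next I would solve this affine equation. Since $c \in \miu$, we have $c \ne 1$, and in characteristic $2$ this gives $c + 1 \ne 0$, so the equation has the unique formal solution $x = (b+1)/(c+1)$. This already establishes the ``at most one solution'' claim: there is at most one candidate, namely $(b+1)/(c+1)$. The remaining task is to determine exactly when this candidate actually lies in $\fm$, i.e., to characterize solvability in $\fm$.

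For the solvability condition I would impose $x \in \fm$, which is equivalent to $x = \overline{x} = x^{2^m}$. Applying the Frobenius $y \mapsto y^{2^m}$ to $x = (b+1)/(c+1)$ and using $\overline{c} = c^{2^m} = 1/c$ (because $c \in \uc$ means $c^{2^m+1}=1$, so $c^{2^m} = c^{-1}$) gives
\begin{equation*}
\overline{x} = \frac{\overline{b}+1}{\,c^{-1}+1\,} = \frac{c(\overline{b}+1)}{1+c}.
\end{equation*}
Setting $x = \overline{x}$ and clearing the common denominator $c+1$ yields $b+1 = c(\overline{b}+1)$, i.e., $c(\overline{b}+1)+b+1=0$, exactly the stated condition; and when this holds the solution is $(b+1)/(c+1)$ as claimed.

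I expect the only subtle point to be the handedness of the conjugate relation $\overline{c} = 1/c$ and making sure the membership condition $x = \overline{x}$ is translated correctly through the division; since we are in characteristic $2$ the sign bookkeeping is trivial, but one must be careful that $c+1 \neq 0$ is genuinely guaranteed (which it is, since $c \neq 1$) so that dividing by $c+1$ is legitimate. The main conceptual step is simply recognizing that $F$ collapses to the identity on $\fm$; once that is in hand, the proposition is a short computation rather than a genuine obstacle.
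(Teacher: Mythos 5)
Your proof is correct and follows essentially the same route as the paper: both reduce \eqref{eq-c-diff} on $\fm$ to the linear equation $(c+1)x = b+1$ (using that the Niho exponent acts as the identity on $\fm$, hence on both $x$ and $x+1$), note $c+1\neq 0$ to get uniqueness, and then impose $x = x^{2^m}$ together with $c^{2^m}=c^{-1}$ to obtain the condition $c(\overline{b}+1)+b+1=0$. No gaps; the argument matches the paper's proof step for step.
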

\begin{proof}
Observe that $x=0$ (resp. $x=1$) is a solution of $\eqref{eq-c-diff}$ when $b=1$ (resp. $b=c$) and \eqref{eq-c-diff} can be reduced to $(c+1)x=b+1$ if $x\in\fm$. This implies that $\eqref{eq-c-diff}$ has at most one solution in $\fm$ for $c\in\miu$ and $b\in\f$. Moreover, the solution $x=(b+1)/(c+1)$ belongs to $\fm$ only if
$$(\frac{b+1}{c+1})^{2^m} = \frac{c(\overline{b}+1)}{c+1} = \frac{b+1}{c+1},$$
which leads to $c(\overline{b}+1)+b+1=0$. This completes the proof.
\end{proof}

Next, we consider the solutions of \eqref{eq-c-diff} in $\fs$. Using the polar representation of elements in $\f$, for any $x\in\fs$, assume that $x=\alpha u, x+1=\beta v$, where $\alpha,\beta\in\fm^*$ and $u,v\in\miu$. We then claim that every $x\in\fs$ can be uniquely determined by $(u,v)$. Note that $\alpha u+\beta v=1$. This together with the fact $u,v\in\miu$ gives
\begin{equation}\label{eqana-1}
	\begin{cases}
		\alpha u+\beta v = 1, \\
		\alpha u^{-1}+\beta v^{-1} = 1,
	\end{cases}
\end{equation}
which implies that
$$\begin{cases}
	\alpha(u^2+v^2) = u(v^2+1), \\
	\beta(u^2+v^2) = (u^2+1)v
\end{cases}$$
and $u\ne v$. Thus, we have
\begin{equation}\label{eqana-2}
	\alpha = \frac{u(1+v^2)}{u^2+v^2}, \\
	\beta = \frac{v(1+u^2)}{u^2+v^2},
\end{equation}
which shows that $\alpha,\beta\in\fm^*$ can be determined by $(u,v)$, and
$$x = \alpha u = \frac{u^2(1+v^2)}{u^2+v^2}.$$
Let $\mathcal{D} = \{ (u,v): u,v\in\miu,\ u\ne v \}$, according to Lemma \ref{lem-map}, there exists a unique $(u,v)\in\mathcal{D}$ such that $x=\frac{u^2(1+v^2)}{u^2+v^2}$ for each $x\in\fs$, that is, $x$ can be uniquely determined by $(u,v)$ for any $x\in\fs$.

For each $x\in\fs$, by $x=\alpha u$, $x+1=\beta v$, \eqref{eq-c-diff} can be written as
\begin{equation}\label{eqana-3}
	\begin{cases}
		\beta v^{1-2s}+c\alpha u^{1-2s} = b, \\
		\alpha u+\beta v = 1.		
	\end{cases}
\end{equation}
Multiplying $u^{2s}v^{2s}$ on both sides of the first equation of \eqref{eqana-3} and substituting $\alpha u=\beta v+1$ (resp. $\beta v=\alpha u+1$)
into it gives
$$\begin{cases}
	\alpha u(u^{2s}+cv^{2s}) = (bv^{2s}+1)u^{2s}, \\
	\beta v(u^{2s}+cv^{2s}) = (bu^{2s}+c)v^{2s}.
\end{cases}$$
Note that $u^{2s}$ is a one-to-one mapping over $\mu_{2^m+1}$ due to $\gcd(2s,2^m+1)=1$. For convenience, let $y=u^{2s}$ and $z=v^{2s}$, then $y\ne z$, $y,z\ne 1$, and $(y,z)$ can be uniquely determined by $(u,v)$. The above system of equations becomes
\begin{equation}\label{eqana-4}
	\begin{cases}
		\alpha u(y+cz) = y(bz+1), \\
		\beta v(y+cz) = (by+c)z,
	\end{cases}
\end{equation}
where $u\ne v$, $y\ne z$, and $u,v,y,z\ne 1$.

We then discuss the solutions of \eqref{eq-c-diff} in $\fs$ by \eqref{eqana-4}.

\begin{proposition}\label{prop-fs-1}
   Let $c\in\miu$ and $b\in\f$. With the notation as above, define a system of equations on $(y,z)$ with $y\ne z$ and $y,z\ne 1$ as follows:
	\begin{equation}\label{eqprop-1}
		\begin{cases}
			c(z+\overline{b})y^{2^k}+bz+1 = 0, \\
			(y+c\overline{b})z^{2^k}+by+c = 0.
		\end{cases}
	\end{equation}
    Then the solutions of \eqref{eq-c-diff} in $\fs$ can be expressed as
    $$\begin{aligned}
        & \{ \frac{b^{-(2^k+1})+1}{c^{-(2^k+1)}+1}: y+cz=0,\ \eqref{eqprop-1}\ holds \} \cup \{ \frac{y(bz+1)}{y+cz}: y+cz\ne 0,\ \eqref{eqprop-1}\ holds \}.
    \end{aligned}$$
\end{proposition}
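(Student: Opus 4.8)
The plan is to analyze the correspondence between solutions $x\in\fs$ of \eqref{eq-c-diff} and pairs $(y,z)$, and to show that it is governed exactly by \eqref{eqprop-1}. Writing $\mathcal{D}'=\{(y,z): y,z\in\miu,\ y\ne z\}$, the assignment $x\mapsto(y,z)=(u^{2s},v^{2s})$ is a bijection from $\fs$ onto $\mathcal{D}'$: indeed $x\mapsto(u,v)$ is a bijection onto $\mathcal{D}$ by Lemma~\ref{lem-map}, and $w\mapsto w^{2s}$ permutes $\uc$ since $\gcd(2s,2^m+1)=1$. By the derivation leading to \eqref{eqana-4}, such an $x$ solves \eqref{eq-c-diff} if and only if the associated $(y,z)$ satisfies \eqref{eqana-4}. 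Throughout I would use two facts: conjugation $w\mapsto w^{2^m}$ inverts every element of $\uc$ and fixes $\fm$, so that $\overline{u}=u^{-1}$, $\overline{c}=c^{-1}$, $\overline{\alpha}=\alpha$, $\overline{\beta}=\beta$; and, decisively, $u^2=y^{2^k+1}$ and $v^2=z^{2^k+1}$, because $y=u^{2s}$ with $s(2^k+1)\equiv1\pmod{2^m+1}$ and $u^{2^m+1}=1$.

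For the forward inclusion with $y+cz\ne0$, I would raise each equation of \eqref{eqana-4} to the power $2^m$ and clear denominators to obtain its conjugate, namely $\alpha u^{-1}(y+cz)=c(z+\overline{b})$ and $\beta v^{-1}(y+cz)=y+c\overline{b}$. Dividing each original equation by its conjugate cancels the common factor $\alpha(y+cz)$ (resp.\ $\beta(y+cz)$) and yields $u^2=\frac{y(bz+1)}{c(z+\overline{b})}$ and $v^2=\frac{(by+c)z}{y+c\overline{b}}$; substituting $u^2=y^{2^k+1}$, $v^2=z^{2^k+1}$ and cancelling one factor of $y$ and $z$ respectively produces precisely the two equations of \eqref{eqprop-1}. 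The value of $x$ is then read off the first equation of \eqref{eqana-4} as $x=\alpha u=\frac{y(bz+1)}{y+cz}$.

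The degenerate case $y+cz=0$ is the fussier one. Here the two equations of \eqref{eqana-4} force $bz+1=0$ and $by+c=0$ (since $y,z\ne0$), while their conjugates force $z+\overline{b}=0$ and $y+c\overline{b}=0$; together these give $z=b^{-1}=\overline{b}$ and $y=cz$, and make every term of \eqref{eqprop-1} vanish, so \eqref{eqprop-1} indeed holds. Since \eqref{eqana-4} now determines $x$ only degenerately, I would instead compute $x$ from the intrinsic expression $x=\frac{u^2(1+v^2)}{u^2+v^2}=\frac{y^{2^k+1}(1+z^{2^k+1})}{y^{2^k+1}+z^{2^k+1}}$ coming from \eqref{eqana-2}; inserting $y=cz$ and $z^{2^k+1}=b^{-(2^k+1)}$ and simplifying gives the stated $x=\frac{b^{-(2^k+1)}+1}{c^{-(2^k+1)}+1}$. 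The one thing to verify is that the denominator $y^{2^k+1}+z^{2^k+1}=z^{2^k+1}(c^{2^k+1}+1)$ does not vanish, and this is exactly where the hypothesis $\gcd(2^k+1,2^m+1)=1$ enters: it guarantees $c^{2^k+1}\ne1$ for every $c\in\miu$.

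The step I expect to be the main obstacle is the reverse inclusion, because the division used above could in principle create pairs $(y,z)$ satisfying \eqref{eqprop-1} that do not arise from a solution of \eqref{eq-c-diff}. To rule this out, given $(y,z)\in\mathcal{D}'$ satisfying \eqref{eqprop-1} I would set $x=\Phi^{-1}(y,z)$ and let $b'=(x+1)^{s(2^m-1)+1}+cx^{s(2^m-1)+1}$ be the value $x$ actually yields; the forward step then shows $(y,z)$ also satisfies \eqref{eqprop-1} with $b$ replaced by $b'$. Adding the two versions of \eqref{eqprop-1} gives $c\,\overline{(b+b')}\,y^{2^k}=(b+b')z$ and $c\,\overline{(b+b')}\,z^{2^k}=(b+b')y$, and if $b\ne b'$ these combine to $y^{2^k+1}=z^{2^k+1}$, forcing $y=z$ by $\gcd(2^k+1,2^m+1)=1$ and contradicting $(y,z)\in\mathcal{D}'$. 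Hence $b'=b$, so $x$ solves \eqref{eq-c-diff}, and the solution set of \eqref{eq-c-diff} in $\fs$ is exactly the claimed union.
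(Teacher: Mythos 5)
Your proof is correct, and in one important place it takes a genuinely different route from the paper's. The necessity direction coincides with the paper's argument in substance: dividing each equation of \eqref{eqana-4} by its conjugate is the same computation as the paper's imposing $\alpha=\alpha^{2^m}$ and $\beta=\beta^{2^m}$ on the expressions \eqref{eqprop-2} (and the division is legitimate, since the conjugated left-hand sides $\alpha u^{-1}(y+cz)$ and $\beta v^{-1}(y+cz)$ are nonzero when $y+cz\ne 0$); likewise your degenerate case $y+cz=0$ recovers exactly the paper's Case 1: $z=b^{-1}=\overline{b}$, $y=cz$, hence $b\in\uc\backslash\{1,c\}$, with $x$ computed through Lemma \ref{lem-map}. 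The genuine difference is in the sufficiency direction. The paper proves that every pair $(y,z)$ satisfying \eqref{eqprop-1} produces a solution by direct verification: it splits into $b\notin\uc$ and $b\in\uc$ (the latter needing the factorization of \eqref{eqprop-1} and a short contradiction argument to pin down $y^{2^k}=c^{-1}b$, $z^{2^k}=b$), solves \eqref{eqprop-1} for $y$, $y^{2^k}$, $z$, $z^{2^k}$, computes $x^{2^m-1}=y^{-(2^k+1)}$ and $(x+1)^{2^m-1}=z^{-(2^k+1)}$, and checks \eqref{eq-c-diff} explicitly. You instead exploit the bijection $x\leftrightarrow(y,z)$: let $b'$ be the value that the candidate $x=\Phi^{-1}(y,z)$ actually attains, apply the already-proved necessity direction with $b'$ in place of $b$, and add the two copies of \eqref{eqprop-1}; if $b\ne b'$ this yields $y^{2^k+1}=z^{2^k+1}$, hence $y=z$ by $\gcd(2^k+1,2^m+1)=1$, contradicting $y\ne z$. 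This is cleaner: it eliminates the case distinction on $b\in\uc$ and all fractional-power verifications, at the cost of being less explicit. One caveat if your proof were to replace the paper's: the paper's explicit computations are reused downstream, since the formulas $y^{2^k}=\frac{bz+1}{c(z+\overline{b})}$, $y=\frac{c(\overline{b}z^{2^k}+1)}{z^{2^k}+b}$ and the values $y^{2^k}=c^{-1}b$, $z^{2^k}=b$ for $b\in\uc$ are cited again in Cases 2.3 and 3 of the proof of Theorem \ref{thm-main}, so those identities would have to be re-derived there.
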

\begin{proof}
We shall slip the discussions into the following two cases according to \eqref{eqana-4}.

\textbf{Case 1:} $y+cz=0$. If this case happens, then by \eqref{eqana-4} one immediately has $y=cb^{-1}$ and $z=b^{-1}$, which implies $b\in\mu_{2^m+1}\backslash\{1,c\}$ since $y,z\ne 1$. Since $y=u^{2s}$ and $z=v^{2s}$, where $s=(2^k+1)^{-1}$, then by $y+cz=0$, we have
$$u^2=y^{2^k+1}=c^{2^k+1}b^{-(2^k+1)},\ v^2=z^{2^k+1}=b^{-(2^k+1)}.$$
Thus, the pair $(u,v)$ is uniquely determined by $(c,b)$ for any fixed $c\in\miu$ and $b\in\mu_{2^m+1}\backslash\{1,c\}$. Further, by Lemma \ref{lem-map}, we have
$x=\frac{u^2(1+v^2)}{u^2+v^2}=\frac{b^{-(2^k+1})+1}{c^{-(2^k+1)}+1}\in\fs$, and consequently,
$$x^{2^m-1}=\frac{b^{2^k+1}}{c^{2^k+1}},\ x+1=\frac{c^{-(2^k+1)}+b^{-(2^k+1)}}{c^{-(2^k+1)}+1},\ (x+1)^{2^m-1}=b^{2^k+1}.$$
This leads to
$$(x+1)^{s(2^m-1)+1}+cx^{s(2^m-1)+1}=b\frac{c^{-(2^k+1)}+b^{-(2^k+1)}}{c^{-(2^k+1)}+1}+b\frac{b^{-{2^k+1}}+1}{c^{-{2^k+1}}+1}=b.$$
Again by Lemma \ref{lem-map} one can conclude that $x=\frac{b^{-(2^k+1})+1}{c^{-(2^k+1)}+1}$ is the unique solution of \eqref{eq-c-diff} in $\fs$ if $y+cz=0$ for any fixed $c\in\miu$ and $b\in\mu_{2^m+1}\backslash\{1,c\}$.

\textbf{Case 2:} $y+cz\ne 0.$
For this case, by \eqref{eqana-4}, we have
\begin{equation}\label{eqprop-2}
	\alpha = \frac{y(bz+1)}{u(y+cz)},
	\beta = \frac{(by+c)z}{v(y+cz)}.
\end{equation}
According to $\alpha=\alpha^{2^m}$ and $\beta=\beta^{2^m}$, we obtain \eqref{eqprop-1} and then $x=\alpha u=\frac{y(bz+1)}{y+cz}\in\fs$ since $u\ne 1$. This together with the fact $u\ne 1$ indicates that the solutions of \eqref{eq-c-diff} in $\fs$ is of the form $x=\alpha u=\frac{y(bz+1)}{y+cz}$, where $y\ne z$, $y,z\in\miu$ and satisfy \eqref{eqprop-1}.

In the following, we show that $x=\frac{y(bz+1)}{y+cz}$ is a solution of \eqref{eq-c-diff} in $\fs$ for any distinct $y$ and $z$ satisfying $y,z\in\miu$ and \eqref{eqprop-1}. Observe that $z+\overline{b}\ne 0$ and $y+c\overline{b}\ne 0$ if $b\not\in\uc$. Thus by \eqref{eqprop-1} we have
$$y^{2^k} = \frac{bz+1}{c(z+\overline{b})},\ y = \frac{c(\overline{b}z^{2^k}+1)}{z^{2^k}+b},\
z^{2^k} = \frac{by+c}{y+c\overline{b}},\ z = \frac{c\overline{b}y^{2^k}+1}{cy^{2^k}+b}.$$
This together with $x=\frac{y(bz+1)}{y+cz}$ implies that
$$x^{2^m-1} = y^{-(2^k+1)},\ x+1 = \frac{(by+c)z}{y+cz},\ (x+1)^{2^m-1} = z^{-(2^k+1)},$$
which indicates that $x\in\fs$ since $x^{2^m-1}=y^{-(2^k+1)}\ne 1$ and $x$ is a solution of \eqref{eq-c-diff} due to
$$(x+1)^{s(2^m-1)+1}+cx^{s(2^m-1)+1} = \frac{by+c}{y+cz}+c\frac{bz+1}{y+cz} = b.$$
For the remaining case $b\in\uc$, \eqref{eqprop-1} is reduced to
\begin{equation*}
	\begin{cases}
		(cb^{-1}y^{2^k}+1)(bz+1) = 0, \\
		(y+cb^{-1})(z^{2^k}+b) = 0.
	\end{cases}
\end{equation*}
If $bz+1=0$, i.e., $z=b^{-1}$, then $z^{2^k}+b=(1+b^{2^k+1})/b^{2^k}=0$ holds if and only if $b=1$ since $\gcd(2^k+1,2^m+1)=1$. This contradicts with $z\ne 1$. Thus we have $y+cb^{-1}=0$ and then $y+cz=0$, a contradiction. Hence, $bz+1\ne0$ and consequently, we obtain $cb^{-1}y^{2^k}+1=0$. Similarly, we can show that $y+cb^{-1}\ne 0$ and $z^{2^k}+b=0$. Therefore, we arrive at $y^{2^k}=c^{-1}b$ and $z^{2^k}=b$, where $b\notin\{1,c\}$ since $y,z\ne 1$. Then we can similarly prove that $x=\frac{y(bz+1)}{y+cz}$ belongs to $\fs$ and it is a solution of \eqref{eq-c-diff}. This completes the proof.
\end{proof}

In what follows, we give the proof of Theorem \ref{thm-main}.

\begin{proof1}
Since $\gcd(2^k+1,2^m+1)=1$ if and only if one of $\frac{k}{\gcd(k,m)}$ and $\frac{m}{\gcd(k,m)}$ is even, we have $k$ is even due to $m$ is odd. Then, according to Definition \ref{def-c-diff} and Lemma \ref{lem-c-power}, for the power function $F(x)=x^{s(2^m-1)+1}$, we have
$\max\{_c\Delta_F(0,b)\}=\gcd(s(2^m-1)+1,2^n-1)=\gcd(1-2s,2^m+1)=\gcd(2^k-1,2^m+1)=2^{\gcd(k,m)}+1$ since $\gcd(2^k+1,2^m+1)=1$ and $s=(2^k+1)^{-1}$.

Note that ${_c}\Delta_F(a,b)={_c}\Delta_F(1,\frac{b}{a^{s(2^m-1)+1}})$ for any $a\in\f^*$ and $b\in\f$. Thus, to determine the $c$-differential uniformity of $F(x)$, we need to count the number of solutions of \eqref{eq-c-diff} for $b\in\f$ and $c\in\miu$. For simplicity, define
$$\mathcal{V} = \{ x\in\f: x^{\frac{2^m+1}{2^{\gcd(k,m)}+1}} = 1 \}.$$

\textbf{Case 1:} $b=0.$
In this case, we have
$$c(\overline{b}+1)+b+1 = c+1 \ne 0$$
as $c\ne 1$. Thus \eqref{eq-c-diff} has no solution in $\fm$ by Proposition \ref{prop-fm}. Note that $x\notin\fm$ and $0=b\notin\uc$. Then, from the proof of Proposition \ref{prop-fs-1}, \eqref{eq-c-diff} becomes \eqref{eqprop-1}, which is reduced to
\begin{equation}\label{eqthm-1}
	\begin{cases}
		czy^{2^k}+1 = 0, \\
		yz^{2^k}+c = 0
	\end{cases}
\end{equation}
for $b=0$. This leads to $y=z^{-1}$ and $z^{2^k-1}=c$. Thus \eqref{eqthm-1} has $2^{\gcd(k,m)}+1$ solutions if $c\in\mathcal{V}$ and no solution otherwise since $\gcd(2^k-1,2^m+1)=2^{\gcd(k,m)}+1$. Hence, for $b=0$, by Proposition \ref{prop-fs-1}, \eqref{eq-c-diff} has at most $2^{\gcd(k,m)}+1$ solutions if $c\in\mathcal{V}$ and none solution otherwise, i.e., $_c\Delta_F(1,0)\leq 2^{\gcd(k,m)}+1$ if $c\in\mathcal{V}$ and $_c\Delta_F(1,0)=0$ otherwise.

\textbf{Case 2:} $b\in\mu_{2^m+1}$. In this case we have $\overline{b}=b^{-1}$ and
$$c(\overline{b}+1)+b+1 = (cb^{-1}+1)(b+1),$$
which implies that $c(\overline{b}+1)+b+1=0$ if and only if $b\in\{1,c\}$.

\textbf{Case 2.1:} $b=1.$
According to Proposition \ref{prop-fm}, $x=(b+1)/(c+1)=0\in\fm$ is a solution of \eqref{eq-c-diff}. On the other hand, for $x\in\fs$, from the proof of Proposition \ref{prop-fs-1} we have $y+cz\ne 0$ as $b=1$, thus \eqref{eqprop-1} is reduced to
\begin{equation}\label{eqthm-3}
	\begin{cases}
		(cy^{2^k}+1)(z+1) = 0, \\
		(y+c)(z^{2^k}+1) = 0.
	\end{cases}
\end{equation}
Since $y,z\ne 1$, then we have $y=c$ and $cy^{2^k}=1$, which leads to $y=c=1$, a contradiction. Thus \eqref{eqthm-3} has no solution and consequently \eqref{eq-c-diff} holds for $b=1$ if and only if $x=0$, i.e., ${_c}\Delta_F(1,1)=1$.

\textbf{Case 2.2:} $b=c$. Similar to Case 2.1, we can show that \eqref{eq-c-diff} has exactly one solution $x=1$ when $b=c$, i.e., ${_c}\Delta_F(1,c)=1$.

\textbf{Case 2.3:} $b\in\mu_{2^m+1}\backslash\{1,c\}.$ If this case occurs, then one obtains $c(\overline{b}+1)+b+1\ne 0$ and \eqref{eq-c-diff} has no solution in $\fm$ by Proposition \ref{prop-fm}. For $x\in\fs$, according to the proof of Proposition \ref{prop-fs-1}, one can conclude that $x= \frac{b^{-(2^k+1)}+1}{c^{-(2^k+1)}+1}$ and $x=\frac{y(bz+1)}{y+cz}$ are solutions of \eqref{eq-c-diff} when $b\in\mu_{2^m+1}\backslash\{1,c\}$, where $y$ and $z$ are uniquely determined by $y^{2^k}=c^{-1}b$ and $z^{2^k}=b$. Therefore \eqref{eq-c-diff} has at most two solutions in this case, i.e., ${_c}\Delta_F(1,b)\leq 2$ for $b\in\mu_{2^m+1}\backslash\{1,c\}$.

\textbf{Case 3:} $b\in\f^*\backslash\mu_{2^m+1}.$ If $c(\overline{b}+1)+b+1=0$, i.e. $c=(b+1)/(\overline{b}+1)$, then \eqref{eq-c-diff} has one solution in $\fm$. If $c(\overline{b}+1)+b+1 \ne 0$, then \eqref{eq-c-diff} has no solution in $\fm$ due to Proposition \ref{prop-fm}. For $x\in\fs$, according to the proof of Proposition \ref{prop-fs-1}, \eqref{eq-c-diff} becomes \eqref{eqprop-1} with $y+cz\ne 0$ when $b\in\f^*\backslash\mu_{2^m+1}$. Recall from \eqref{eqprop-1} that
$$\begin{cases}
	y^{2^k} = \frac{bz+1}{c(z+\overline{b})}, \\
	y = \frac{c(\overline{b}z^{2^k}+1)}{z^{2^k}+b}.
\end{cases}$$
Taking $2^k$-th power on both sides of the second equation and adding to the first one, we obtain the following equation about $z$ for $z\ne 1$,
\begin{equation*}\label{eqthm-6}
    (c^{2^k+1}\overline{b}^{2^k}+b)z^{2^{2k}+1}+(c^{2^k+1}\overline{b}^{2^k+1}+1)z^{2^{2k}}+(c^{2^k+1}+b^{2^k+1})z+c^{2^k+1}\overline{b}+b^{2^k} = 0.
\end{equation*}
Note that $\gcd(2^{m+k}-1,2^m+1)=1$ and $b\ne 1$. This indicates that $c^{2^k+1}\overline{b}^{2^k}+b=b(c^{2^k+1}b^{2^{m+k}-1}+1)\ne 0$. Then the above equation has at most $2^{\gcd(k,m)}+1$ solutions in $\mu_{2^m+1}$ from Lemma \ref{lem-quad}. Observe that $z=1$ is a solution of the above equation if $c=(b+1)/(\overline{b}+1)$. Thus the above equation has at most $2^{\gcd(k,m)}$ solutions in $\miu$ if $c=(b+1)/(\overline{b}+1)$, and at most $2^{\gcd(k,m)}+1$ solutions in $\miu$ if $c\ne (b+1)/(\overline{b}+1)$. Consequently, \eqref{eq-c-diff} has at most $2^{\gcd(k,m)}+1$ solutions in $\f$ for $b\in \f^*\backslash\mu_{2^m+1}$, i.e., ${_c}\Delta_F(1,b)\leq 2^{\gcd(k,m)}+1$ for $b\in\f^*\backslash\mu_{2^m+1}$.

Combining the above cases, we have $\max\{_c\Delta_F(1,b):b\in\f\}=2^{\gcd(k,m)}+1$ for all $b\in\f$. In particular, the $c$-differential uniformity of $F(x)$ is $3$ if $\gcd(k,m)=1$. This completes the proof.
\end{proof1}

\section{Conclusion}\label{conc}

In this paper, we investigated the $c$-differential uniformity of the power function of type Niho. More specifically, we proved that the $c$-differential uniformity of the power function $x^{s(2^m-1)+1}$ over $\f$ is $2^{\gcd(k,m)}+1$, where $m$ is odd, $\gcd(2^k+1,2^m+1)=1$, $s=(2^k+1)^{-1}$ and $1\ne c\in\f$ satisfying $c^{2^m+1}=1$. We obtained a class of power functions with $c$-differential uniformity $3$ when $\gcd(k,m)=1$, which is inequivalent to the known ones by comparing the algebraic degrees and the values of $c$. Although the notion of $c$-differential uniformity is originally a legacy of concepts of classical cryptographic interest and introduced for cryptographic motivation, our main interest in this article is to study it for families of important power functions aimed at deepening the results and discovering more in the sense of coding theory and for combinatorial interests.

\section*{Declaration of competing interest}

The authors declare that they have no known competing financial interests or personal relationships that could have appeared to influence the work reported in this paper.

\section*{Acknowledgments}

This work was supported by the National Key Research and Development Program of China (No. 2021YFA1000600), the National Natural Science Foundation of China (No. 62072162), the Natural Science Foundation of Hubei Province of China (No. 2021CFA079) and the Knowledge Innovation Program of Wuhan-Basic Research (No. 2022010801010319).

\end{document}